\theoremstyle{plain}
\newtheorem{theorem}{Theorem}[section]
  \newtheorem{lemma}[theorem]{Lemma}
\theoremstyle{definition}
  \newtheorem{definition}{Definition}[section]
\theoremstyle{remark}
\numberwithin{equation}{section}
 \newcounter{smallarabics}
\newenvironment{arabicenumerate}
{\begin{list}{{\normalfont\textrm{\arabic{smallarabics})}}}
  {\usecounter{smallarabics}\setlength{\itemindent}{0cm}
  \setlength{\leftmargin}{5ex}\setlength{\labelwidth}{4ex}
  \setlength{\topsep}{0.75\parsep}\setlength{\partopsep}{0ex}
   \setlength{\itemsep}{0ex}}}
{\end{list}}
 \DeclareMathOperator{\supp}{Supp}
\let\al=\alpha \let\be=\beta  
 \let\la=\lambda \let\om=\omega 
\let\si=\sigma
  \let\La=\Lambda
\newcommand{\caB}{{\mathcal B}}
\newcommand{\caG}{{\mathcal G}}
\newcommand{\caH}{{\mathcal H}}
\newcommand{\caI}{{\mathcal I}}
\newcommand{\caJ}{{\mathcal J}}
\newcommand{\caS}{{\mathcal S}}
\newcommand{\bbC}{{\mathbb C}}
\newcommand{\bbN}{{\mathbb N}}
\newcommand{\bbR}{{\mathbb R}}
\newcommand{\bbZ}{{\mathbb Z}}
\newcommand{\opunit}{\text{1}\kern-0.22em\text{l}}
\newcommand{\frH}{{\mathfrak H}}
\newcommand{\frU}{{\mathfrak U}}
\DeclareMathAlphabet{\mathpzc}{OT1}{pzc}{m}{it}
\newcommand{\id}{\textrm{d}}
\DeclareMathOperator{\Tr}{Tr}
\newcounter{smallroman}
\newcommand{\ben}{\begin{arabicenumerate}}
\newcommand{\een}{\end{arabicenumerate}}
\newcommand{\beq}{\begin{equation}}
\newcommand{\eeq}{\end{equation}}
\newcommand{\baq}{\begin{eqnarray}}
\newcommand{\eaq}{\end{eqnarray}}
\renewcommand{\sp}{{\mathrm{sp}}}
\newcommand{\ran}{\mathrm{Ran} }
\renewcommand{\d}{\mathrm{d}}
\newcommand{\norm}{ \|}
\newcommand{\str}{ |}
\begin{document}

\title[]  {\large A note on the non-commutative Laplace--Varadhan integral lemma}

\maketitle

\begin{center}
{\bf Wojciech De Roeck}\footnote{Postdoctoral Fellow FWO, email: {\tt wojciech.deroeck@fys.kuleuven.be}}\\
Institut f\"ur Theoretische Physik, ETH Z\"urich \\
Instituut voor Theoretische Fysica, K.U.Leuven
\\ \vspace{10pt}
{\bf Christian Maes}\footnote{email: {\tt
christian.maes@fys.kuleuven.be}} \\
Instituut voor Theoretische Fysica, K.U.Leuven\\\vspace{10pt}
{\bf Karel Neto\v{c}n\'{y}\footnote{email: {\tt netocny@fzu.cz}}}\\
Institute of Physics AS CR, Prague \\\vspace{10pt}
{\bf Luc Rey-Bellet\footnote{email: {\tt luc@math.umass.edu}}} \\
Department of Mathematics \& Statistics \\
University of Massachusetts, Amherst          \\\vspace{10pt}
\end{center}

\begin{abstract}
We continue the study of the free energy of quantum lattice spin
systems where to the local Hamiltonian $H$ an arbitrary mean field
term is added, a polynomial function of the arithmetic mean of
some local observables $X$ and $Y$ that do not necessarily commute. 
By slightly extending a  recent paper by Hiai, Mosonyi, Ohno and Petz \cite{hiaimosonyiohno}, we prove in  general  that the free energy is given by a variational principle over the range of the operators $X$ and $Y$.
As in \cite{hiaimosonyiohno}, the  result is a noncommutative extension of the Laplace-Varadhan
asymptotic formula.
\end{abstract}

\section{Introduction}\label{sec: introduction}

\subsection{Large deviations}
One of the highlights in the combination of analysis and
probability theory is the asymptotic evaluation of certain
integrals.  We have here in mind  integrals of the form, for some
real-valued function $G$,
\begin{equation}\label{ld}
\int \d \mu_n( x) \exp \{ v_n G(x) \} ,\qquad v_n\nearrow +\infty\,\mbox{
as } n\nearrow +\infty
\end{equation}
for which the measures $\mu_n$ satisfy a law of large numbers.
Such integrals can be evaluated depending on the asymptotics of
the $\mu_n$.  The latter is the subject of the theory of large
deviations, characterizing the rate of convergence in the law of
large numbers.  In a  typical scenario, the $\mu_n$ are the
probabilities of some macroscopic variable, such as the average
magnetization or the particle density in ever growing volumes
$v_n$ and as distributed in a given equilibrium Gibbs ensemble.
Then, depending on the case, thermodynamic potentials $\caJ$ make
the rate function $\d\mu_n( x) \sim  \id x \exp \{ -v_n\caJ(x)  \} $ in
the sense of large deviations for Gibbs measures, see
\cite{ollagibbsrandomfields,ellis,georgiibook,varadhanasymptoticprobabilities,varadhanlargedeviations}. That theory of large deviations
is however broader than the applications in equilibrium
statistical mechanics. Essentially, when the rate function for
$\mu_n$ is given by $\caJ$,  then the integral~\eqref{ld} is computed as
\beq \label{eq: variation from varadhan}  \frac{1}{v_n} \log \int \id \mu_n(x) \exp\{ v_n G(x ) \}    \quad \mathop{\longrightarrow}\limits_{n \nearrow +\infty} \quad   \sup_x \{ G(x)-\caJ(x) \}   \eeq
This is
a typical application of Laplace's asymptotic formula for the
evaluation of real-valued integrals. The systematic combination with the
theory of large deviations gives the so called Laplace-Varadhan
integral lemma.

We first  recall the large deviation principle (LDP). Let $(M,d)$ be some complete separable metric
space.
 \begin{definition}
 The sequence of measures $\mu_n$  on $M$ satisfies a LDP
 with rate function $\caJ: M \to \bbR^+\cup\{ +\infty\}$ and speed $v_n \in \bbR^+$ if
 \ben
 \item{$\caJ$ is convex and has closed level sets, i.e.,
  \beq \label{level sets} \{ \caJ^{-1}(x) , \, x \leq c  \} \eeq is closed in
   $(M,d)$ for all $c \in \bbR^+$;  }
  \item{ for all Borel sets $U \subset M$ with interior $\mathrm{int} \,U$
  and closure $\mathrm{cl}\, {U}$, one has
  \begin{eqnarray}
           \liminf_{n \nearrow +\infty} \frac{1}{v_n}  \log \mu_n(U)   &\geq&    -  \inf_{ u \in \mathrm{int} \,U} \caJ(u)  \nonumber  \\
             \limsup_{n \nearrow +\infty}
              \frac{1}{v_n}\log \mu_n(U)         &\leq&  -  \inf_{ u \in \mathrm{cl} \,{U}} \caJ(u) \nonumber
  \eaq
  }
 \een
We say that the rate function $\caJ$ is \emph{good} whenever the level sets  \eqref{level sets} are compact.
\end{definition}

For the transfer of LDP, one considers a pair $(\mu_n,\nu_n)$, $n \nearrow \infty$ of sequences of absolutely continuous measures on $(M,d)$ such that
\[
\frac{\id \nu_n}{\id \mu_n} (x) = \exp \{v_n\,G(x) \} ,\quad\;
\mu_n-\mbox{almost everywhere}
\]
for some measurable mapping $G: M \rightarrow \bbR$.
We now state an instance of the Laplace-Varadhan lemma.

\begin{lemma}[Laplace-Varadhan integral lemma]\label{lv}
Assume that $G$ is bounded and continuous and that the sequence $(\mu_n)$
satisfies a large deviation principle with good rate function
$\caJ$ and speed $v_n$. Then $(\nu_n)$ satisfies a large deviation
principle with good rate function $G - \caJ$ and speed $v_n$.
\end{lemma}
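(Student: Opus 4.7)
The plan is to prove the upper and lower LDP bounds for $\nu_n$ separately, reducing each to the corresponding bound for $\mu_n$ by using boundedness and continuity of $G$ as the bridge. I would work with the candidate rate function $I(x) = \caJ(x) - G(x)$, which satisfies $-\inf_U I = \sup_U\{G - \caJ\}$; this identity links the bounds obtained to the conclusion of the lemma.

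For the upper bound on a closed set $F \subset M$, I would fix $\de > 0$ and exploit boundedness of $G$ to cover its range by finitely many intervals $(c_k - \de, c_k + \de)$, $k = 1, \ldots, N$. With $F_k = F \cap G^{-1}((c_k - \de, c_k + \de))$, one then has
\beq
\nu_n(F) = \int_F e^{v_n G(x)}\, \id \mu_n(x) \leq \sum_{k=1}^N e^{v_n(c_k + \de)}\, \mu_n(F_k).
\eeq
Since this is a finite sum, $\limsup_n \frac{1}{v_n}\log(\cdot)$ distributes as a maximum, and the $\mu_n$-LDP upper bound applied to each $\mu_n(F_k)$ gives
\beq
\limsup_n \frac{1}{v_n}\log \nu_n(F) \leq \max_{1 \leq k \leq N} \Bigl\{ c_k + \de - \inf_{x \in \cl F_k} \caJ(x) \Bigr\}.
\eeq
Continuity of $G$ forces $c_k \leq G(x) + \de$ for every $x \in \cl F_k$, so the right-hand side is dominated by $\sup_{x \in F}\{G(x) - \caJ(x)\} + 2\de$; letting $\de \to 0$ concludes this half.

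For the lower bound on an open set $U$, I would pick any $x_0 \in U$ with $\caJ(x_0) < +\infty$ (otherwise the bound is vacuous) and any $\ep > 0$. Continuity of $G$ supplies an open neighborhood $V \subset U$ of $x_0$ on which $G \geq G(x_0) - \ep$. Then $\nu_n(U) \geq e^{v_n(G(x_0) - \ep)}\, \mu_n(V)$, and the $\mu_n$-LDP lower bound together with $x_0 \in V$ yields
\beq
\liminf_n \frac{1}{v_n}\log \nu_n(U) \geq G(x_0) - \ep - \caJ(x_0).
\eeq
Optimizing over $\ep > 0$ and $x_0 \in U$ gives the lower bound. Goodness of $I$ is immediate: boundedness of $G$ implies $\{I \leq c\} \subset \{\caJ \leq c + \sup G\}$, which is compact by the goodness of $\caJ$, and lower semicontinuity of $I$ is inherited from $\caJ$ and from continuity of $G$.

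The only delicate step is the tightness of the upper bound, which relies on the finite covering of the range of $G$. This is exactly where boundedness of $G$ is used essentially: if $G$ were merely bounded above or unbounded, a large value of $G$ could contribute an exponential weight that no finite cover can control, and the argument would require a supplementary exponential tightness hypothesis on $(\mu_n)$. The other ingredients (continuity for both the upper and lower bounds, goodness for preserving compactness of level sets) enter in a straightforward way and do not pose real obstacles.
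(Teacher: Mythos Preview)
The paper does not give its own proof of this lemma; immediately after the statement it refers the reader to the literature (``For more general versions and proofs we refer to the literature, see e.g.\ \ldots''). Your argument is the standard textbook proof and is correct as written; you also silently fix the sign slip in the stated rate function by working with $I=\caJ-G$ rather than $G-\caJ$.
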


For  more general versions and proofs we refer to the literature,
see e.g. \cite{varadhanasymptoticprobabilities,varadhanlargedeviations,deuschelstroock,dembozeitouni,denhollander}; it remains an important
subject of analytic probability theory to extend the validity  of
the variational formulation   \eqref{eq: variation from varadhan}  and to deal with its applications.

\subsection{Mean-field interactions}

From the point of view of equilibrium statistical
mechanics, one can also think of the formula~\eqref{ld} as giving (the exponential of)
the pressure or free energy when adding a mean field type term 
to a  Hamiltonian which is a sum of local interactions.

The choice of the function $G$ is
then typically monomial with a power decided by the number of
particles or spins that are in direct interaction.  For example,
the free energy of an Ising-like model with such an extra mean field interaction would be given by the limit
\begin{equation}\label{ising}
\lim_{\La \nearrow \bbZ^d}  \frac 1{|\La|} \log \sum_{\eta \in \{+,-\}^\La} \exp
\Bigl(-\beta H_\La(\eta) + \lambda_p\,|\La|\,
\Bigl(\frac{1}{|\La|} \sum_{i\in \La} \eta_i \Bigr)^p \Bigr)
\end{equation}
for $p=1,2,\ldots$, where $H_\La(\eta)$ is the (local) energy of
the spin configuration $\eta$ and the limit takes a sequence of
regularly expanding volumes $\La$ to cover some given lattice.  The case $p=1$
corresponds to the addition of a magnetic field $\lambda_1$; $p=2$
is most standard and adds effectively a very small but long range
two-spin interaction.  Higher $p-$values are also not uncommon in
the study of Ising interactions on hypergraphs, and even very
large $p$ has been found relevant e.g. in models of spin
glasses and in information theory \cite{bolleheylenskantzos}.

The form \eqref{ld} is easily recognized in \eqref{ising}, with
\[ \mu_n (x) \sim  \sum_{ \eta\in \{+,-\}^\La,\, \sum_{i \in \La} \eta_i =x \str\La\str }\exp \{ -\beta H_\La(\eta )\}, \qquad   v_n = |\La| \] and
the function $G(x)= \lambda_p\,x^p$.  The Laplace-Varadhan lemma  applies
to~\eqref{ising} since we know that the sequence of Gibbs states
with density $\sim \exp \{-\beta H_\La(\cdot) \}$ satisfies a LDP
with a good rate function $\caJ_\text{cl}$  and speed $\str \La
\str$. The result reads that \eqref{ising} is given by the variational formula
\beq
\sup_{u \in [-1,1]} \{ \la_p u^p - \caJ_\text{cl}(u)   \}
\eeq


In noncommutative versions the local Hamiltonian $H$ and the additional mean field term are allowed not to commute with each other. That is natural within the statistical mechanics of quantum spin systems and this is also the context of the present paper.

\subsection{Noncommutative extensions} \label{sec: noncomm extensions}

Although it has proven very useful to think of integrals~\eqref{ld} within the framework of  probability and large deviation theory, it is fundamentally a problem of analysis.
However, without such a probabilistic context, the question of a noncommutative extension of the Laplace-Varadhan Lemma~\ref{lv} becomes ambiguous and it in fact allows for different formulations, each possibly having a physical interpretation on its own.

One approach is to ask for the asymptotic evaluation of the expectations
\begin{equation}\label{var}
\lim_{\La \nearrow \bbZ^d} \frac{1}{|\La|} \log
  \om_\La\bigl( e^{|\La|\, G(\bar X_\La)} \bigr)
  \end{equation}
under a family of quantum states $\omega_\La$ where
$\bar X_\La$ would now be the arithmetic mean
of some quantum observable in volume $\La$.
To be specific, one can take
$\om_\La\sim \exp \{ -\beta H_\La \}$ a quantum Gibbs state for a quantum
Hamiltonian $H_\La$ and $\bar X_\La = (\sum_{i\in
\La}X_i)/|\La|$ the mean magnetization in some fixed direction. Arguably, this formulation is closely related to the asymptotic statistics of outcomes in von Neumann measurements of $\bar X_\La$, \cite{netocnyredig,reybelletld,ogata,hiaimosonytomohiro}.

A more general class of possible extensions is obtained by considering the limits of
\begin{equation}\label{q2}
  \frac{1}{|\La|} \log  \Tr_\La \bigl(
  \si_\La^{\frac{1}{K}}\,e^{\frac{|\La|}{K}\, G(\bar X_{\La})} \bigr)^K\,,\qquad
  \La \nearrow \bbZ^d
\end{equation}
for different $K > 0$, where $\si_\La$ is the density matrix of a quantum state in volume $\La$. For the canonical form
$\si_\La = \exp(-\beta H_\La) / Z^\beta_\La$ with
local Hamiltonian $H_\La$ at inverse temperature $\beta$, \eqref{q2}
becomes
\begin{equation}
  \frac{1}{|\La|} \log \frac{1}{Z^\be_{\La}} \Tr_\La \bigl(
  e^{-\frac{\be}{K} H_\La}\,e^{\frac{|\La|}{K}\, G(\bar X_\La)} \bigr)^K\,,\qquad
  \La \nearrow \bbZ^d
\end{equation}
There is no \emph{a priori} reason to exclude any particular value of $K$ from consideration.  Two standard options are:    $K = 1$, which corresponds to the expression~\eqref{var} above, and $K \nearrow +\infty$, which, by the Trotter product formula, boils down to
\begin{equation}\label{eq: K-infty}
  \frac{1}{|\La|} \log \frac{1}{Z^\be_{\La}} \Tr_\La
  \bigl( e^{-\be H_\La + |\La|\,G(\bar X_\La)} \bigr)\,,\qquad
  \La \nearrow \bbZ^d
\end{equation}
which is the free energy of a corresponding quantum spin model, cf.~\eqref{ising}.
In the present paper, we study the case $K \nearrow +\infty$  (without touching the question of   interchangeability of both limits).

One of our results, Theorem~\ref{thm: main} with $Y=\bar Y_\La=0$, is  of the form
\begin{equation}\label{res3}
  \lim_{\La \nearrow \bbZ^d} \frac{1}{|\La|} \log \Tr_\La
  \bigl( e^{-\be H_\La + |\La|\,G(\bar X_{\La})} \bigr)
  = \sup_{-\|X\| \leq u \leq \|X\|} \{G(u) - \cal J(u)\}
\end{equation}
Note that we omitted the normalization factor $ 1/ Z^\be_{\La}$ since it merely adds a constant (independent of $G$) to \eqref{eq: K-infty}.
In the usual context of the theory of large deviations  formula \eqref{res3}
arises as a change of rate function.  However, while our  result
\eqref{res3} very much looks like Varadhan's formula in Lemma
\ref{lv}, there is a big difference in interpretation: The
function $\cal J$ is not as such the rate function of large
deviations for $\bar X_{\La}$.  Instead, it is given as the Legendre transform 
\beq
\caJ(u)=\sup_{t \in \bbR}\, \{  t u - q(t)\} , \qquad   u  \in \bbR
\eeq
of a function $q(\cdot) $  which is the pressure corresponding to a linearized interaction, i.e.\ 
\beq\label{def: pressure}
  q(t) =   \lim_{\La \nearrow \bbZ^d} \frac{1}{|\La|}  \log   \Tr_\La
  \bigl( e^{-\be H_\La +  t |\La| \bar X_{\La})} \bigr)\eeq 

\subsection{Several non-commuting observables: Towards joint large deviations?} \label{sec: several noncomm}

In the previous Section \ref{sec: noncomm extensions}, we made the tacit assumption that there is a single observable $\bar X_\La$ corresponding to some operator on Hilbert space. However, in formula \ref{ising},  the observable $\frac{1}{|\La|} \sum_{i\in \La} \eta_i $ could equally well represent a vector-valued magnetization which, upon quantization, would correspond to several non-commuting observables $\bar X_{\La}, \bar Y_{\La}$, say, the magnetization along the $x$ and $y$-axis, respectively.  In the commutative theory, this case does not require special attention; the framework of large deviations applies equally regardless of whether the observable takes values in  $\bbR$ or $\bbR^2$.   
Obviously, this is not true in the noncommutative setting and in fact, we do not even know a natural analogue of the generating function \eqref{var}, since we do not dispose of a simultaneous Von Neumann measurement of $\bar X_\La$ and $\bar Y_\La$.   One can take the point of view that this is inevitable in quantum mechanics, and insisting is pointless.
Yet,  as $\La \nearrow \bbZ^d$, the commutator  
\beq
 [\bar X_\La, \bar Y_\La ] = O(\frac{1}{\str \La \str})
\eeq
vanishes and hence the joint measurability of $\bar X_\La, \bar Y_\La $ is restored on the macroscopic scale.  We refer the reader to \cite{deroeckmaesnetocnyqhthm} where this issue is discussed and studied in more depth. 

The advantage of the approach via the  Laplace-Varadhan Lemma is that one can set aside these conceptual questions and study 'joint large deviations' of $\bar X_\La$ and $\bar Y_\La$ by 
 choosing $G$ to be a joint function of $\bar X_\La$ and $\bar Y_\La$, for example a symmetrized monomial 
 \beq \label{eq: monomial}
 G(\bar X_\La, \bar Y_\La) =  (\bar X_\La)^k (\bar Y_\La)^l +  (\bar X_\La)^l (\bar Y_\La)^k, \qquad  \textrm{for some} \, \,  k,l  \in \bbN
 \eeq
 and check whether the formula \eqref{res3} remains valid with some obvious adjustments. 
This turns out to be the case and it is our main result: Theorem \ref{thm: main}.

\subsection{Comparison with previous results}

The asymptotics of the expression~\eqref{eq: K-infty} was first
studied and the result \eqref{res3} was first obtained by Petz \emph{et al.\ }\cite{petzraggioverbeure}, in the case where
the Hamiltonian $H_\La$ is made solely from a one-body
interaction.  The corresponding equilibrium state is then a product
state. In~\cite{hiaimosonyiohno}, Hiai \emph{et al.\ }  generalized this result to the case of
locally interacting spins but the lattice dimension was restricted
to $d=1$.  However, the authors of \cite{hiaimosonyiohno} argue
that the restriction to $d=1$ can be lifted in the
high-temperature regime. The main reason is that their work relies
heavily on an asymptotic decoupling condition which is proven in that regime, \cite{arakiion}.  One should observe here that this asymptotic decoupling condition
 in fact  implies a large deviation principle for $\bar X_{\La}$, as
 follows from the work of Pfister \cite{pfisterlattice}. Hence, in the language of Section \ref{sec: noncomm extensions},  \cite{hiaimosonyiohno}  evaluates \eqref{eq: K-infty} (the case $K=\infty$) in those regimes where the \eqref{var} (the case $K=1$) is analytic.  

The present paper elaborates on the result of \cite{hiaimosonyiohno}  in two ways. First, we remark that, in our setup, the decoupling condition is actually not necessary for \eqref{res3} to hold, and therefore one can do away with the restriction to $d=1$ or high temperature.  Hence, again referring to Section \ref{sec: noncomm extensions}, the case $K=\infty$ can be controlled even when we know little about the case $K=1$.  To drop the decoupling condition, it is absolutely essential that we start from finite-volume Gibbs states, and not from finite-volume restrictions of infinite-volume Gibbs states, as is done in \cite{hiaimosonyiohno}. 

Second, we show that by the same formalism, one can treat the case of several noncommuting observables, as explained in Section \ref{sec: several noncomm}.  The most  serious step in this generalization is actually an extension of the result of \cite{petzraggioverbeure} to noncommuting observables. This extension is stated in Lemma \ref{lem: PRV} and proven in Section \ref{sec: proof of PRV}.  \\

\noindent \textbf{Note} 
While we were finishing this paper, we learnt of a  similar project by W. de Siqueira Pedra and J-B. Bru. Their result \cite{brupedra} is nothing less than a full-fledged theory of equilibrium states with mean-field terms in the Hamiltonian, describing not only the mean-field free energy  (as we do here), but also the states themselves.   Also, their results hold for fermions, while ours are restricted to spin systems, and they provide interesting examples.    Yet, the focus of our paper differs from theirs and our main result is not contained in their paper.

%

\subsection{Outline}
In Section \ref{sec: setup}, we sketch the setup. We introduce spin systems on the lattice, noncommutative polynomials and ergodic states.  
 Section \ref{sec: results} describes the result of the paper. 
 The remaining Sections \ref{sec: approx by ergodic},  \ref{sec: lower bound}, \ref{sec: upper bound} and \ref{sec: proof of PRV} contain the proofs.

\section{Setup} \label{sec: setup}

\subsection{Hamiltonian and observables} \label{sec: hamiltonian and observables}
We consider a quantum spin system on the regular lattice
$\bbZ^d$, $d = 1,2,\ldots$. We briefly introduce the essential setup below, and we refer to~\cite{israel,simon} for more expanded, standard introductions.

The single site Hilbert space $\cal H$ is finite-dimensional (isomorphic to $\bbC^n$) and for any finite volume $\La \subset \bbZ^d$, we set
$\cal H_\La = \otimes_{ \La} {\cal H}$.  
The $C^*$-algebra of bounded operators on $\caH_\La$ is denoted by
$\caB_\La \equiv \caB(\caH_\La)$. The standard embedding
$\caB_\La \subset \caB_{\La'}$ for $\La \subset \La'$ is assumed throughout.
The quasi-local algebra $\frU$ is defined as the norm closure  of the finite-volume algebras 
\beq
\frU := \overline{\mathop{\bigcup}\limits_{\La \, \, \mathrm{finite} } \caB_{\La}  }
\eeq

Denote by $\tau_i$, $i \in \bbZ^d$, the translation which
shifts all observables over a lattice vector     $i$, i.e.,  $\tau_i$ is a homomorphism from $\caB_\La$ onto $\caB_{i+\La}$.

We introduce an interaction potential $\Phi$, that is a
collection $(\Phi_A)$ of Hermitian elements of $\cal B_A$, labeled
by finite subsets $A \subset \bbZ^d$.  We assume translation
invariance (i) and a finite range (ii):
\ben
\item[i)] $ \tau_i(\Phi_A)=\Phi_{i+A} $  for all finite $A \subset \bbZ^d$;  \vspace{1mm}
\item[ii)]   there is a $d_{max} < \infty$ such that, if $\mathrm{diam}(A) > d_{max}$, then $\Phi_A=0$. 
\een
\vspace{1mm}
In estimates, we will frequently use the number 
\beq
r(\Phi):=  \sum_{ A \ni 0}    \norm \Phi_A \norm  <\infty
\eeq
The local Hamiltonian in a finite volume $\La$ is
\begin{equation}\label{intera}
H_{\La} \equiv H_\La^\Phi = \sum_{A\subset \La} \Phi_A
\end{equation}
which corresponds to free or open boundary conditions. Boundary
conditions will however turn out to be irrelevant for our
results. We will drop the superscript $\Phi$ since we will keep the interaction potential fixed.

 Let $X,Y, \ldots$ denote local observables on the lattice, located at the origin, i.e., $\supp X$ (which is defined as the smallest set $A$ such that $X \in \caB_A$) is a finite set which includes $0 \in \bbZ^d$.

We write
\beq
X_\La :=  \mathop{\sum}\limits_{j \in \bbZ^d,  \supp \tau_j X \subset \La}    \tau_j X
\eeq
and
\beq
\bar X_\La :=  \frac{1}{\str \La \str} X_\La
\eeq
for the corresponding intensive observable (the `empirical average' of $X$). 

All of these operators are naturally embedded into the quasi-local algebra $\frU$.  At some point, we will also require the infinite volume intensive observable
\beq
\bar X  \sim   \bar X_{\La \nearrow \infty} \nonumber
\eeq
Some care is required in dealing with $\bar X$ since it does not belong to the  quasi-local algebra $\frU$.  We will further comment on this in Section \ref{sec: states}.

\subsection{Noncommutative polynomials}\label{sec: noncommutative functions}

We will perturb the Hamiltonian $H^{\Phi}_{\La}$ by a mean field term of the form $\str \La \str G(\bar X_{\La},\bar Y_{\La})$ where $G$ is a  `noncommutative polynomial' of the operators $\bar X_{\La},\bar Y_{\La}$, e.g.\  as in \eqref{eq: monomial}.

In this section, we introduce  these noncommutative polynomials  $G$ as quantizations of polynomial functions $g$.
First, we define 
\beq
\ran (X, Y):= [-\norm X \norm, \norm X \norm] \times [-\norm Y \norm, \norm Y \norm]
\eeq
This definition is motivated by the fact that  (`$\sp$' stands for spectrum)
\beq
\sp \bar X_{\La} \times \sp \bar Y_{\La}    \subset  \ran (X, Y), \qquad  \textrm{for all} \,\, \La
\eeq

Let $g$ be a  real polynomial function on the rectangular  set $\ran (X,Y)$. Using the symbol $\caI$ for the collection of all finite sequences from the binary set $\{1,2\}$, any map
$\tilde G: \caI \longrightarrow \bbC$ is called a quantization of $g$ whenever
\begin{equation}  \label{cond: g}
  \sum_{n = 0}^{N}  
  \sum_{\al = (\al(1),\ldots,\al(n)) \in \caI}
  \tilde G(\al)\,x_{\al(1)} \ldots x_{\al(n)} = g(x_1,x_2)
\end{equation}
for all $(x_1, x_2) \in \ran(X,Y)$ and for some $N \in \bbN$.
A quantization $\tilde G$ is called symmetric whenever 
\beq
 \overline{\tilde G (\al(1),\ldots,\al(n)) }=   \tilde G(\al(n),\ldots,\al(1)).
\eeq
Any such symmetric quantization $\tilde G$ defines a self-adjoint operator
\begin{equation}
  G(X,Y) = \sum_{n = 0}^{N} 
  \sum_{\al = (\al(1),\ldots,\al(n)) \in \caI}
  \tilde G(\al)\,X_{\al(1)} \ldots X_{\al(n)} 
\end{equation}
taking $X_1 \equiv X$ and $X_2 \equiv Y$.

In the thermodynamic limit, one expects different quantizations of $g$ to be equivalent:
\begin{lemma}
Let $\tilde G$ and $\tilde G'$ be any two  quantizations of 
$g: \ran(X,Y) \longrightarrow \bbR$. Then
\beq
  \|G(\bar X_\La, \bar Y_\La) - G'(\bar X_\La, \bar Y_\La) \| \leq 
  \frac{C(X,Y)}{\str \La \str}
\eeq
for some $C_{}(X,Y) < \infty$, and for all finite volumes $\La$.
\end{lemma}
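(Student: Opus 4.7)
The plan is to exploit the fact that, although the operators $\bar X_\La$ and $\bar Y_\La$ do not commute, their commutator has norm of order $1/|\La|$, so any two non-commutative polynomial expressions in $\bar X_\La, \bar Y_\La$ that reduce to the same commutative polynomial in $x_1, x_2$ must be equal up to a correction of order $1/|\La|$.

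First I would establish the elementary bounds $\|\bar X_\La\| \leq \|X\|$ and $\|\bar Y_\La\| \leq \|Y\|$ (each sum contains at most $|\La|$ translates of norm $\|X\|$, resp.\ $\|Y\|$), and then prove the key estimate
\[
  \|[\bar X_\La, \bar Y_\La]\| \;\leq\; \frac{C(X,Y)}{|\La|},
\]
by writing $[\bar X_\La,\bar Y_\La] = |\La|^{-2}\sum_{i,j}[\tau_i X,\tau_j Y]$ and noting that $[\tau_i X,\tau_j Y]$ vanishes unless $\supp\tau_i X\cap\supp\tau_j Y\neq\emptyset$; the number of surviving pairs is at most $|\La|\cdot|\supp X|\cdot|\supp Y|$, and each such commutator is bounded by $2\|X\|\|Y\|$.

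Next I would decompose the difference according to the ``commutative type'' of each monomial. Group the finite index set $\caI$ (only sequences up to length $N$ contribute) by pairs $(n,k)$ where $n$ is the length of $\alpha$ and $k$ the number of $1$'s. The defining condition \eqref{cond: g} implies that for each $(n,k)$ one has $\sum_{\alpha\in\caI_{n,k}}\tilde G(\alpha)=\sum_{\alpha\in\caI_{n,k}}\tilde G'(\alpha)$, both sums equal to the coefficient of $x_1^k x_2^{n-k}$ in $g$. Fix for each $(n,k)$ an arbitrary reference ordering (say $\bar X_\La^k\bar Y_\La^{n-k}$) and write
\[
  G(\bar X_\La,\bar Y_\La)-G'(\bar X_\La,\bar Y_\La)
  =\sum_{n,k}\sum_{\alpha\in\caI_{n,k}}\bigl(\tilde G(\alpha)-\tilde G'(\alpha)\bigr)
  \bigl(X_{\alpha(1)}\cdots X_{\alpha(n)} -\bar X_\La^k\bar Y_\La^{n-k}\bigr),
\]
using the matching-sum property to subtract off the reference term with impunity.

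Finally I would bound each bracket. Any two orderings of the same multiset of operators can be connected by at most $\binom{n}{2}$ adjacent transpositions, each of which produces, by the identity $AB=BA+[A,B]$, an additional term of the form $P\cdot[\bar X_\La,\bar Y_\La]\cdot Q$ with $P,Q$ products of at most $n-2$ factors drawn from $\{\bar X_\La,\bar Y_\La\}$. Using the norm estimates from the first step together with the commutator bound, each such term is $\leq (\max(\|X\|,\|Y\|))^{n-2}\cdot C(X,Y)/|\La|$. Summing over the finitely many values of $n,k$ and the finitely many $\alpha\in\caI_{n,k}$ produces the claimed bound with a constant $C(X,Y)$ independent of $\La$. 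The main obstacle is really only the bookkeeping in this last step: keeping track of the ``telescoping'' of adjacent swaps and verifying that every error term carries at least one factor of $[\bar X_\La,\bar Y_\La]$; once this is in hand, the estimate is immediate.
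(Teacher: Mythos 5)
Your proposal is correct and follows essentially the same route as the paper: the paper's proof consists of exactly the commutator estimate $\|[\bar X_\La,\bar Y_\La]\|=O(1/\str\La\str)$ and the remark that the lemma is "a simple consequence" of it, while you supply the details that the paper leaves implicit (matching of coefficient sums for each commutative monomial type via the defining condition \eqref{cond: g}, and telescoping through adjacent transpositions, each contributing one commutator factor). The bookkeeping is sound, so this is a complete version of the paper's argument rather than a different one.
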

\begin{proof}
This is a simple consequence of  the fact that the commutator of macroscopic observables vanishes in the thermodynamic limit, more precisely, 
\beq
 \norm [\bar X_\La, \bar Y_\La] \norm   \leq  \frac{1}{\str \La \str}  \norm X \norm \str \supp X \str  \times  \norm Y \norm \str \supp Y \str. 
\eeq
\end{proof}

Indeed, our results, Theorems \ref{thm: main} and \ref{prop: mean field var principle}, do not depend on the choice of quantization. This can also be checked a priori using the above lemma and the log-trace inequality in \eqref{eq: log-trace inequality}.

\subsection{Infinite-volume states} \label{sec: states}

A state $\om_\La$ is a positive linear functional on $\caB_\La$, normalized by $\norm \om_\La \norm = \om_\La(1)=1$.
An example is the tracial state, $\om_\La (\cdot) \sim \Tr_\La (\cdot)$.  In general we consider states $\om_\La$ as characterized by their density
matrix $\sigma_\La$, $\om_\La(\cdot) = \Tr_\La (\sigma_\La\cdot)$. 

An infinite volume state $\om$ is a a positive normalized function on the $C^*$-algebra $\frU$ (the quasi-local algebra). It is translation invariant when $\om(A)=\om(\tau_j A)$ for all $j \in \bbZ^d$ and $A \in \frU$.
A translation-invariant state $\om$  is \emph{ergodic} whenever it is an extremal point in the convex set of translation invariant states. 
A state is called \emph{symmetric}  whenever it is invariant under a permutation of the lattice sites, that is, for any sequence of one-site observable $A_1, \ldots, A_n  \in \caB_{\{0\}} \subset \frU $ and $i_1, \ldots, i_n \in \bbZ^d$ 
\beq
\om(\tau_{i_1}( A_1)  \tau_{i_2}( A_2)  \ldots   \tau_{i_n}( A_n) )  =  \om(\tau_{i_{\pi(1)}}( A_1)  \tau_{i_{\pi(2)}}( A_2)  \ldots   \tau_{i_{\pi(n)}}( A_n) )  \eeq
for any permutation $\pi$ of the set $\{1,\ldots, n \}$. The set of ergodic, resp.\ symmetric states on $\frU$ is denoted by $\caS_{\mathrm{erg}}, \caS_{\mathrm{sym}} $, respectively. 

At some point we will need the theorem by St{\o}rmer \cite{stormer} that states that any $\om \in \caS_{\mathrm{sym}}$ can be decomposed as
\beq \label{eq: stormer}
\om = \int_{\textrm{prod.}} \d \nu_{\om}(\phi) \phi
\eeq
for some probability measure $\nu_\om$ on the set of product states.  Of course, the set of product states can be identified with the (finite-dimensional) set of states on the one-site algebra $\caB_{\{0\}}=\caB(\caH) $.

For a finite-volume state $\om_\La$ on $\caB_{\La}$, we consider  the entropy functional
\beq S(\om_\La)\equiv S_{\La}(\om_\La) = - \Tr \si_\La \log \si_\La   \eeq 
  The mean entropy of translation-invariant infinite-volume state $\om$ is denoted as 
  \beq
  s(\om) := \lim_{\La \nearrow \bbZ^d}   \frac{1}{\str \La \str}   S(\om_\La), \qquad \textrm{with}\, \, \om_{\La} := \om\big\str_{\caB_\La} \, (\textrm{restriction to}\, \La)
  \eeq
  In this formula and in the rest of the paper, the limit $\lim_{\La \nearrow \bbZ^d}$ is meant in the sense of  \emph{Van Hove}, see e.g.\ \cite{israel, simon}.
  Standard properties of the functional $s$ are its affinity and upper semicontinuity (with respect to the weak topology on states).

In Section \ref{sec: hamiltonian and observables}, we mentioned the 'observables at infinity' $\bar X$ and $\bar Y$, postponing their definition to the present section. 
Expressions like $
\om(\bar X^{l} \bar Y^{k}) $  (for some positive numbers $l,k$)
can be defined as
\beq \label{eq: limit k and l}
\om(\bar X^{l}\bar Y^{k})  :=  \lim_{\La, \La' \nearrow \bbZ^d}    \om(\bar X^{l}_{\La}  \bar Y^{k}_{\La'}),
\eeq
provided that the limit exists.  We use the following standard result that can be viewed as a noncommutative law of large numbers
\begin{lemma} \label{lem: concentration}
For $\om \in \caS_{\mathrm{erg}}$, the limit  \eqref{eq: limit k and l} exists and 
\beq \label{eq: concentration}
\om(\bar X^{l} \bar Y^{k}) = [\om(X)]^l   [\om(Y)]^k 
\eeq
\end{lemma}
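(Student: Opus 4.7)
The plan is to pass to the GNS representation of $\om$ and invoke a multidimensional mean ergodic theorem. Let $(\pi_\om, \caH_\om, \Om_\om)$ denote the GNS triple associated with $\om$. Translation invariance produces a unitary representation $U : \bbZ^d \to \caU(\caH_\om)$ satisfying $U_j \Om_\om = \Om_\om$ and $U_j \pi_\om(A) U_j^\ast = \pi_\om(\tau_j A)$ for all $A \in \frU$, $j \in \bbZ^d$. Ergodicity of $\om$ is equivalent to $\Om_\om$ spanning the subspace of $U$-invariant vectors, so the mean ergodic projection equals $P = |\Om_\om\rangle\langle \Om_\om|$. Applying von Neumann's mean ergodic theorem along Van Hove sequences to the vector $\pi_\om(X)\Om_\om$ gives
$$\pi_\om(\bar X_\La)\,\Om_\om \;=\; \frac{1}{|\La|}\!\!\sum_{j :\,\supp \tau_j X \subset \La}\!\! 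U_j\,\pi_\om(X)\,\Om_\om \;\xrightarrow[\La \nearrow \bbZ^d]{}\; P\,\pi_\om(X)\,\Om_\om \;=\; \om(X)\,\Om_\om$$
strongly in $\caH_\om$. The boundary defect between the indexing set $\{j : \supp \tau_j X \subset \La\}$ and the full set $\La$ is of size $o(|\La|)$ and hence harmless. An analogous statement holds with $Y$ and $\La'$ in place of $X$ and $\La$.

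Next, extend to powers by induction on $l$. Writing $c_X = \om(X)$ and using the uniform bound $\|\pi_\om(\bar X_\La)\| \leq \|X\|$, decompose
$$\pi_\om(\bar X_\La)^l \Om_\om - c_X^l\,\Om_\om \;=\; \pi_\om(\bar X_\La)\bigl[\pi_\om(\bar X_\La)^{l-1}\Om_\om - c_X^{l-1}\Om_\om\bigr] + c_X^{l-1}\bigl[\pi_\om(\bar X_\La)\Om_\om - c_X\,\Om_\om\bigr].$$
Both summands tend to zero in norm (the first by the inductive hypothesis together with the uniform bound, the second by the base case), whence $\pi_\om(\bar X_\La)^l \Om_\om \to c_X^l\,\Om_\om$ strongly. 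The same argument yields $\pi_\om(\bar Y_{\La'})^k \Om_\om \to c_Y^k\,\Om_\om$, where $c_Y = \om(Y)$.

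Finally, to obtain the joint limit, write
$$\om(\bar X_\La^l \bar Y_{\La'}^k) - c_X^l c_Y^k \;=\; \bigl\langle \Om_\om,\, \pi_\om(\bar X_\La)^l\bigl[\pi_\om(\bar Y_{\La'})^k\Om_\om - c_Y^k\Om_\om\bigr]\bigr\rangle + c_Y^k\bigl[\om(\bar X_\La^l) - c_X^l\bigr].$$
The first term is bounded by $\|X\|^l\,\|\pi_\om(\bar Y_{\La'})^k\Om_\om - c_Y^k\Om_\om\|$, which vanishes as $\La' \nearrow \bbZ^d$ \emph{uniformly in} $\La$; the second vanishes as $\La \nearrow \bbZ^d$ by the previous step. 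This establishes both the existence of the joint limit and the formula \eqref{eq: concentration}. I do not foresee a serious obstacle: the only routine technicalities are the $\bbZ^d$-version of the mean ergodic theorem along Van Hove sequences and the discarding of the $o(|\La|)$ boundary sites in the definition of $X_\La$, both of which are standard.
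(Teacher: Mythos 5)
Your proof is correct and follows essentially the same route as the paper's Appendix A: pass to the GNS representation, use the mean ergodic theorem with the rank-one invariant projection $P_{\psi}$ coming from ergodicity, and then handle products by inserting the projection (your inductive telescoping with the uniform bound $\|\pi_\om(\bar X_\La)\|\le\|X\|$ is just a more explicit version of the paper's double-sum computation). No issues.
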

Note that $\om(X)=\om(\bar X)$ and $\om(Y)=\om(\bar Y)$ by translation invariance.
An immediate corollary is that for  a noncommutative polynomial $G$ which is  a quantization of $g$ (see Section \ref{sec: noncommutative functions}), and $\om \in \caS_{\mathrm{erg}}$:
\beq \label{eq: concentration general}
\om( G(\bar X, \bar Y)) = g(\om(X), \om(Y))
\eeq

For the convenience of the reader, we sketch the proof of Lemma \ref{lem: concentration} in Appendix  \ref{app: concentration}

Finally, we note that Lemma \ref{lem: concentration} does not require the state $\om$ to be  trivial at infinity.  
Triviality at infinity is a stronger notion which is not used in the present paper.  In particular, the state $\bar{\mu}$ constructed in Section \ref{sec: approx by ergodic} is ergodic, but not trivial at infinity, since it fails to be ergodic with respect to  a subgroup of lattice translations.

\section{Result} \label{sec: results}
Choose  $X,Y$ to be  local operators and let $H^{\Phi}_\La$ be the Hamiltonian corresponding to a finite-range, translation invariant interaction $\Phi$, as in Section \ref{sec: hamiltonian and observables}.
Let $G(\cdot, \cdot)$ be a symmetric quantization of a polynomial $g$ on the rectangle $\ran (X, Y)$, as defined in Section \ref{sec: noncommutative functions},  and define the ``$G$-mean field partition function"
\beq \label{def: ZG}
Z^G_{\La}(\Phi) :=  \Tr_\La
  \bigl( e^{-H_\La+ |\La|\,G(\bar X_\La,\bar Y_\La )} \bigr)
\eeq
with $\bar X_\La, \bar Y_\La$ empirical averages of $X,Y$.
The following theorem is our main result:
%

\begin{theorem}\label{thm: main}
Define the pressure
\begin{equation} \label{def: pressure2}
  p(u,v) = \lim_{\La \nearrow \bbZ^d}
  \frac{1}{|\La|} \log \Tr_\La e^{-H_\La^\Phi + u X_\La + v Y_\La}
\end{equation}
and its Legendre transform
\begin{equation}
  I(x,y) = \sup_{(u,v) \in \bbR^2} (u x + v y - p(u,v))
\end{equation} 
Then 
\begin{equation}\label{eq: LD-numbers}
  \lim_{\La \nearrow \bbZ^d} 
  \frac{1}{\str \La \str} \log  Z^G_{\La}(\Phi) =
  \sup_{(x,y) \in \bbR^2} (g(x,y) - I(x,y))
\end{equation}
where  the limit $\La \nearrow \bbZ^d$ is in the sense of Van Hove, as in \eqref{def: pressure2}. In particular, the LHS of
\eqref{eq: LD-numbers}  does not depend on the particular form of quantization taken.
\end{theorem}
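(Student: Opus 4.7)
The plan is to prove matching lower and upper bounds for $\frac{1}{|\La|}\log Z^G_\La(\Phi)$, after first reformulating the right-hand side of \eqref{eq: LD-numbers} as a variational problem over infinite-volume states. By the standard Gibbs variational principle for quantum spin systems \cite{israel,simon},
\beq
p(u,v) = \sup_{\om \text{ trans.\,inv.}} \bigl\{ s(\om) - e_\Phi(\om) + u\,\om(X) + v\,\om(Y) \bigr\},
\eeq
with $e_\Phi(\om) := \lim_\La |\La|^{-1}\om(H_\La)$. Convex duality applied to this representation, together with the affinity of $s$ and $e_\Phi$ in $\om$, yields
\beq \label{eq: target-var}
\sup_{(x,y) \in \bbR^2} \bigl(g(x,y) - I(x,y)\bigr) = \sup_{\om \in \caS_{\mathrm{erg}}} \bigl\{ s(\om) - e_\Phi(\om) + g(\om(X), \om(Y)) \bigr\},
\eeq
where the restriction to ergodic states is permitted by the approximation result established in Section \ref{sec: approx by ergodic}.

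For the lower bound, I would use a trial state. Given $\om \in \caS_{\mathrm{erg}}$ with finite-volume marginal density matrix $\sigma_\La$, the Bogoliubov inequality $\log \Tr e^A \geq \Tr(\sigma A) + S(\sigma)$ applied to $A = -H_\La + |\La|\,G(\bar X_\La, \bar Y_\La)$ gives
\beq
\log Z^G_\La(\Phi) \geq -\om(H_\La) + |\La|\,\om\bigl(G(\bar X_\La, \bar Y_\La)\bigr) + S(\sigma_\La).
\eeq
Dividing by $|\La|$ and letting $\La \nearrow \bbZ^d$, the three terms converge respectively to $-e_\Phi(\om)$, to $g(\om(X), \om(Y))$ by termwise application of Lemma \ref{lem: concentration} to the polynomial $G$, and to $s(\om)$. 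Taking the supremum over $\om \in \caS_{\mathrm{erg}}$ and comparing with \eqref{eq: target-var} yields the desired lower bound.

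For the upper bound, the classical strategy of partitioning $\ran(X,Y)$ via joint spectral projections of $\bar X_\La, \bar Y_\La$ is unavailable. Instead I would tile $\La$ with translates of a large cube $C$, view $\bar X_\La$ and $\bar Y_\La$ as symmetric averages over blocks of the corresponding block observables $\{\bar X_C^{(k)}\}$, $\{\bar Y_C^{(k)}\}$, and note that the local Hamiltonian $H_\La$ differs from the sum $\sum_k H_C^{(k)}$ of independent block Hamiltonians by a surface term of order $o(|\La|)$, since $\Phi$ has finite range. This reduces $Z^G_\La(\Phi)$ to a form suitable for the noncommutative Petz--Raggio--Verbeure Lemma \ref{lem: PRV}, which evaluates symmetric expectations of polynomials of noncommuting block-level observables by means of St{\o}rmer's decomposition \eqref{eq: stormer} into product states over the blocks. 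Taking first $\La \nearrow \bbZ^d$ and only afterwards $|C| \to \infty$, the resulting supremum over product states on the block algebra relaxes into the ergodic supremum in \eqref{eq: target-var}, matching the lower bound.

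The principal obstacle is the upper bound. The noncommutativity of $\bar X_\La$ and $\bar Y_\La$ forces one to replace classical partitioning by the PRV-style expansion of Section \ref{sec: proof of PRV}; moreover, to avoid any decoupling or analyticity condition on the Gibbs state (the improvement over \cite{hiaimosonyiohno}), the surface errors from the block decomposition must be controlled uniformly in the finite-volume partition function $Z^G_\La(\Phi)$, rather than via \emph{a priori} properties of an infinite-volume equilibrium state.
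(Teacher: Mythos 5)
Your proposal is correct and takes essentially the same route as the paper: the identification of $\sup_{(x,y)}(g(x,y)-I(x,y))$ with the ergodic-state variational formula is exactly how the paper deduces Theorem~\ref{thm: main} from Theorem~\ref{prop: mean field var principle} (Gibbs variational principle plus Legendre--Fenchel involution), your Bogoliubov trial-state argument is the lower bound of Section~\ref{sec: lower bound}, and your block decomposition with the extended PRV lemma and St{\o}rmer decomposition is the upper bound of Section~\ref{sec: upper bound}. There are no substantive differences in strategy.
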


As discussed in  Section \ref{sec: introduction}, our result expresses the pressure of the mean field Hamiltonian through a variational principle. 
To  derive this result, it is helpful to represent this pressure first as a variational problem on a larger space, namely that of ergodic states, as in Theorem \ref{prop: mean field var principle}. Theorem \ref{thm: main} follows then by parametrizing these states by their values on $X $ and $ Y$.

We also need the 'local energy operator' associated to the interaction $\Phi$ as
\beq \label{def: energy of interaction}
E_{\Phi} :=  \sum_{A\ni 0} \frac{1}{\str A \str} \Phi_A. 
\eeq

\begin{theorem}[Mean-field variational principle]\label{prop: mean field var principle}
Let $s(\cdot)$ be the mean entropy functional, as in Section \ref{sec: states}. Then
\beq \label{eq: prop mean field variational principle}
\lim_{\La \nearrow \bbZ^d} \frac{1}{\str \La \str} \log  Z^G_{\La}(\Phi) = \mathop{\sup}_{\om \in  \caS_{\mathrm{erg}} } \left( g( \om(X),\om(Y)) +  s(\om)- \om(E_{\Phi})      \right)
\eeq
\end{theorem}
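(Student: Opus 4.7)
The plan is to establish the mean-field variational principle by proving matching lower and upper bounds.

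\textbf{Lower bound.} For each $\om \in \caS_{\mathrm{erg}}$ set $\om_\La := \om|_{\caB_\La}$. Gibbs' variational principle gives
\begin{equation*}
\frac{1}{\str\La\str}\log Z^G_\La(\Phi) \;\geq\; -\frac{1}{\str\La\str}\om_\La(H_\La) + \om_\La\bigl(G(\bar X_\La,\bar Y_\La)\bigr) + \frac{1}{\str\La\str}S(\om_\La).
\end{equation*}
Passing to the Van Hove limit $\La \nearrow \bbZ^d$: translation invariance together with the finite-range condition on $\Phi$ yields $\str\La\str^{-1}\om_\La(H_\La) \to \om(E_\Phi)$; the last term tends to $s(\om)$ by the definition of the mean entropy; and Lemma \ref{lem: concentration}, applied monomial-by-monomial to a quantization of $g$ (equivalently \eqref{eq: concentration general}), gives $\om_\La(G(\bar X_\La,\bar Y_\La)) \to g(\om(X),\om(Y))$. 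Supping over $\om \in \caS_{\mathrm{erg}}$ yields the $\geq$ inequality.

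\textbf{Upper bound.} Again by Gibbs' principle, for each $\varepsilon > 0$ there exists $\sigma_\La^\varepsilon$ with the functional value within $\varepsilon$ of $\str\La\str^{-1}\log Z^G_\La(\Phi)$. From $(\sigma_\La^\varepsilon)$ one manufactures a translation-invariant state on the quasi-local algebra by a tiling/periodization procedure, followed by averaging over lattice translations. Since $G(\bar X_\La, \bar Y_\La)$ is exactly permutation symmetric in the sites of $\La$ and $H_\La$ is translation symmetric up to $o(\str\La\str)$ boundary errors (finite range), and since the entropy is concave, none of the three relevant terms is worsened in the limit by this averaging. Extract a weak-$*$ convergent subsequence with limit $\bar\mu \in \caS_{\mathrm{sym}}$. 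Continuity of the energy density gives $\str\La\str^{-1}\sigma_\La^\varepsilon(H_\La) \to \bar\mu(E_\Phi)$; upper semicontinuity of the mean entropy gives $\limsup \str\La\str^{-1}S(\sigma_\La^\varepsilon) \leq s(\bar\mu)$. For the macroscopic $G$-term one invokes the noncommuting Petz--Raggio--Verbeure lemma (Lemma \ref{lem: PRV}) together with St\o rmer's decomposition \eqref{eq: stormer} $\bar\mu = \int \d\nu_{\bar\mu}(\phi)\,\phi$ into product states: each product state $\phi$ is ergodic so $\phi(G(\bar X,\bar Y)) = g(\phi(X),\phi(Y))$ by Lemma \ref{lem: concentration}, and Lemma \ref{lem: PRV} is precisely what controls the passage of $\sigma_\La^\varepsilon(G(\bar X_\La,\bar Y_\La))$ to $\int g(\phi(X),\phi(Y))\,\d\nu_{\bar\mu}(\phi)$ in the thermodynamic limit. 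Combining this with the affinity of $s(\cdot)$ and of $\om \mapsto \om(E_\Phi)$ on translation-invariant states,
\begin{equation*}
\limsup_{\La \nearrow \bbZ^d} \frac{1}{\str\La\str}\log Z^G_\La(\Phi) \;\leq\; \int \bigl[g(\phi(X),\phi(Y)) + s(\phi) - \phi(E_\Phi)\bigr]\,\d\nu_{\bar\mu}(\phi) + \varepsilon \;\leq\; \sup_{\om \in \caS_{\mathrm{erg}}}\bigl[g(\om(X),\om(Y)) + s(\om) - \om(E_\Phi)\bigr] + \varepsilon,
\end{equation*}
since product states are ergodic. Letting $\varepsilon \to 0$ completes the upper bound.

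\textbf{Main obstacle.} The delicate point is the passage to the limit for $\sigma_\La^\varepsilon\bigl(G(\bar X_\La,\bar Y_\La)\bigr)$: the operator $G(\bar X_\La,\bar Y_\La)$ involves increasingly many lattice sites and is not weak-$*$ continuous in the state. The identification of its limit with the classical expression $\int g(\phi(X),\phi(Y))\,\d\nu_{\bar\mu}(\phi)$ over ergodic (indeed product) components is precisely the content of the noncommuting Petz--Raggio--Verbeure Lemma \ref{lem: PRV}, whose extension to two non-commuting observables is the main technical novelty and is carried out separately in Section \ref{sec: proof of PRV}.
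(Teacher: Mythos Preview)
Your lower bound is correct and coincides with the paper's argument.

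The upper bound, however, has a genuine gap. The state $\sigma_\La^\varepsilon$ maximizing the finite-volume functional is the Gibbs state for $-H_\La + |\La|\,G(\bar X_\La,\bar Y_\La)$, and because $H_\La$ contains the genuine interaction $\Phi$ (not just one-site terms), this state is \emph{not} permutation symmetric --- only translation symmetric, and even that only approximately. Consequently the tiled/averaged limit $\bar\mu$ is translation-invariant (indeed ergodic by the construction of Section~\ref{sec: approx by ergodic}), but there is no reason for it to lie in $\caS_{\mathrm{sym}}$, so St{\o}rmer's decomposition is unavailable. More seriously, you have no control of the $G$-term: since $\sigma_\La^\varepsilon$ is not symmetric, the identity $\sigma_\La^\varepsilon(\bar X_\La^k \bar Y_\La^l) \approx \sigma_\La^\varepsilon(\otimes^k X \otimes^l Y)$ fails, and in general $\sigma_\La^\varepsilon(G(\bar X_\La,\bar Y_\La))$ is not bounded above by $g(\bar\mu(X),\bar\mu(Y))$ (for $g(x)=x^2$, Jensen goes the wrong way). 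Your appeal to Lemma~\ref{lem: PRV} is misplaced: that lemma is a bound on the \emph{pressure} in a setting where $D,A,B$ are \emph{one-site} observables, which is exactly what guarantees the Gibbs state is symmetric and makes the $G$-estimate in Section~\ref{sec: proof of PRV} go through. It does not control the expectation $\sigma_\La^\varepsilon(G(\bar X_\La,\bar Y_\La))$ for an interacting system.

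What is missing is the paper's two-scale reduction (Section~\ref{sec: upper bound}): first fix a block $V$, cut all interactions crossing block boundaries via the log-trace inequality (cost $O(|\partial V|/|V|)$), so that the resulting Hamiltonian $H_\La^V$ is a sum of one-block terms. On the coarse lattice $\La/V$ the observables $D=H_V$, $A=\bar X_V$, $B=\bar Y_V$ are genuinely one-site, the Gibbs state \emph{is} symmetric, and Lemma~\ref{lem: PRV} legitimately applies, yielding a supremum over product states on $\caB_V$. Each such product state is then turned into an ergodic state on the original lattice (again via Section~\ref{sec: approx by ergodic}), and finally one sends $V\nearrow\bbZ^d$.
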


To understand how the first term on the RHS of \eqref{eq: prop mean field variational principle} originates from \eqref{def: ZG}, we recall  the equality \eqref{eq: concentration general}   for ergodic states $\om$.

The proof of Theorem \ref{prop: mean field var principle}  is postponed to Sections  \ref{sec: lower bound} and \ref{sec: upper bound}.  Here we prove that Theorem \ref{thm: main} is a rather immediate  
consequence of Theorem \ref{prop: mean field var principle}. 

\begin{proof}[Proof of Theorem \ref{thm: main}]
Writing the right-hand side of~\eqref{eq: prop mean field variational principle} in the form
\begin{equation}
  \sup_{(x,y) \in \bbR^2} (g(x,y) - \tilde I(x,y))
\end{equation}
where
\begin{equation}
  \tilde I(x,y) = \inf_{\om \in  \caS_{\mathrm{erg}}    \atop
  \om(X) = x,\,\om(Y) = y}
  (-s(\om) + \om(E_\Phi))
\end{equation}
is a convex function on $\bbR^2$, infinite on the complement of $\ran(X,Y)$.  It is lower semi-continuous by the lower semi-continuity of $-s$. By using the infinite-volume Gibbs variational principle~\cite{simon, israel}, its Legendre-Fenchel transform reads
\baq
  \sup_{(x,y) \in \bbR^2}  (u x + v y - \tilde I(x,y))
  &=&\mathop{\sup}_{\om \in  \caS_{\mathrm{erg}} }
  (s(\om) - \om(E_\Phi) + u\, \om(X) + v\, \om(Y)) \nonumber
\\
  &=&  p(u,v)
\eaq
The equality $I=\tilde I$ then follows by the involution property of the Legendre-Fenchel transform on the set of convex lower-semicontinuous functions, see e.g.~\cite{simon}.  
\end{proof}

\textbf{Independence of boundary conditions.} Observe that both
Theorem \ref{thm: main} and Theorem \ref{prop: mean field var principle} have
been formulated for the finite volume Gibbs states with open
boundary conditions. It is however easy to check that this choice
is not essential and other equivalent formulations can be
obtained. Indeed, by the standard log-trace inequality, \beq
\label{eq: log-trace inequality}
   \Bigl| \log \Tr_\La
  \bigl( e^{-\be H_\La  + W_\La + |\La|\,G(\bar X_\La, \bar Y_\La)} \bigr)  -   \log  \Tr_{\La}
  \bigl( e^{-\be H_{\La} + |\La|\,G(\bar X_\La, \bar Y_\La)} \bigr) \Bigr|
  \leq   \norm W_{\La} \norm
\eeq and hence if one chooses $W_\La$ such that $\lim_{\La
\nearrow \bbZ^d} \norm W_{\La} \norm / \str \La| = 0$, then we
can replace $-\be H_\La$ by $-\be H_\La + W_\La$ in Theorem \ref{thm: main} and Theorem \ref{prop: mean field var principle}.    

 One also sees that it suffices to prove Theorem \ref{thm: main} and Theorem \ref{prop: mean field var principle} for the case that 
$\La = (a[-L,L])^d$ and $a \nearrow \infty$.  Convergence for $\La \nearrow \bbZ^d$ in the sense of Van Hove then follows  by the above log-trace inequality. 
\vspace{1mm}

\textbf{Finite-range restrictions.} 
It is obvious that our paper contains some restrictions that are not essential. 
In particular, by standard estimates (in particular, those used to prove the existence of the pressure, see e.g.\ \cite{simon}) one can relax the finite-range conditions on the interaction $\Phi$ to the condition that
\beq
\sum_{A \ni 0}   \frac{\norm \Phi_A \norm}{ \str A \str}  < \infty,
\eeq
and similarly for the local observables $X,Y$.
Moreover, it is not necessary that $G$ is a noncommutative polynomial. Starting from \eqref{eq: log-trace inequality}, one checks that it suffices that $G$ can be approximated in operator norm by noncommutative polynomials.

%

\section{Approximation by ergodic states} \label{sec: approx by ergodic}

In this section, we describe a construction that is the main ingredient of  our proofs, as well as of those in \cite{hiaimosonyiohno} and \cite{petzraggioverbeure}. This construction will be used in Sections \ref{sec: upper bound} and \ref{sec: proof of PRV}.

Let $V$ be a hypercube centered at the origin, i.e., $V= [-L, L]^d  $ for some $L>1$ and let 
\beq
 \partial V :=\{  i \in V \, \big\str \,  \exists   i' \in \bbZ^d \setminus V \, \textrm{such that}  \,  i, i'  \, \textrm{are nearest neighbours} \}
\eeq

Consider a state $\mu_V$  on $\caB_{V}$.

We aim to build an infinite-volume ergodic state out of $\mu_V$. First, we cover the infinite lattice $\bbZ^d$ with translates of the hypercube $V$ such that $\bbZ^d = \cup_{i \in (L\bbZ)^d} V+i$.
Then, we define the block product state
\beq
\tilde \mu := \mathop{\otimes}\limits_{ i \in (\bbZ L)^d}  \mu_V \circ \tau_i  \eeq
where $\mu_V \circ \tau_i$ is a state (translate of $\mu_V$) on  $\caB_{i+V}$.
We define also the 'translation-average' of $\tilde \mu$, 
\beq
\bar{\mu}:=   \frac{1}{\str V \str}\sum_{j \in V}   \tilde \mu \circ  \tau_j 
\eeq
We can now check the following properties:
\begin{itemize}
  \item  We have the exact equality of entropies
 \beq
s(\bar{\mu}) = s(\tilde{\mu}) =  \frac{1}{\str V \str} S(\mu_V)
 \eeq
 This follows  from the affinity of the entropy in infinite-volume.
A remark is in order: A priori, the infinite-volume entropy is defined for translation-invariant states, whereas $\tilde \mu$ is only periodic.  However, one easily sees that the entropy can still be defined, e.g.\ be viewing  $\tilde \mu$ as translation-invariant with a bigger one-site space, and correcting the definition by the 'volume' of this one-site space.
\item The state $\bar{\mu}$ is ergodic.  This follows, for example, from Proposition I.7.9 in \cite{simon}, as is also shown in \cite{hiaimosonyiohno} via an explicit calculation.  Note however that the $\bar{\mu}$ is in general not ergodic with respect to the  translations over the sublattice $(L \bbZ)^d$.  This phenomenon (though in a slightly different setting) is commented upon in  \cite{simon} (the end of Section III.5).
\item  The state $\bar{\mu} $ is a good approximation of $\mu_V$ for observables which are empirical averages,  provided $V$ is large.  Consider the local observable $X$ as in Section \ref{sec: hamiltonian and observables}. 
A translate $\tau_j X$ can lie inside a translate of $V$, i.e.\ $\supp \tau_j X \subset V+i$ for some $i \in \bbZ^d/V= (L\bbZ)^d$, or it can lie on the boundary between two translates of $V$.
The difference between  $\bar{\mu} (X)  = \bar{\mu} (\bar X)  $ and $ \mu_V(\bar X_V) $ clearly stems from those translates where $X$ lies on a boundary, and the fraction of such translates is bounded by
\beq
  \str \supp X\str  \times \frac{ \str \partial V \str  }{\str V \str}
\eeq
Hence 
\beq \label{eq: closeness of local A}
\left\str \bar{\mu} (\bar X)  - \mu_V(\bar X_V)  \right\str \leq \norm X \norm   \str \supp X\str  \times \frac{ \str \partial V \str  }{\str V \str}
\eeq
and also $ \left\str \bar{\mu} (X) -  \bar{\mu} (\bar X_{\La}) \right\str \leq  \norm X \norm   \str \supp X\str  \times \frac{ \str \partial \La \str  }{\str \La \str}$.

\end{itemize}

\section{The lower bound} \label{sec: lower bound}

In this section, we prove the following lower bound.  
\begin{lemma}
Recall $ Z^G_{\La}(\Phi)$ as defined in \eqref{def: ZG}. Then 
\beq
\liminf_{\La \nearrow \bbZ^d} \frac{1}{\str \La \str}\log  Z^G_{\La}(\Phi)  \geq \mathop{\sup}_{\om \in  \caS_{\mathrm{erg}} } ( (g(\om(X),\om( Y ))  +s (\om) - \om(E_\Phi) )
\eeq
where all symbols have the same meaning as in Section \ref{sec: results}.
\end{lemma}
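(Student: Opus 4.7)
The plan is to exhibit, for every ergodic trial state $\om\in\caS_{\mathrm{erg}}$, a matching lower bound on $|\La|^{-1}\log Z^G_\La(\Phi)$, and then take the supremum over $\om$. The natural tool is the (Peierls--Bogoliubov / Gibbs) variational inequality: for any self-adjoint operator $A$ on $\caH_\La$ and any density matrix $\rho$,
\beq
\log \Tr_\La e^{A} \;\geq\; S(\rho) + \Tr(\rho\,A).
\eeq
I would apply this with $A=-H_\La+|\La|\,G(\bar X_\La,\bar Y_\La)$ and $\rho=\sigma_\La$ the density matrix of the restriction $\om_\La:=\om\big|_{\caB_\La}$. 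Dividing by $|\La|$ yields
\beq
\frac{1}{|\La|}\log Z^G_\La(\Phi)\;\geq\;\frac{S(\om_\La)}{|\La|}\;-\;\frac{\om(H_\La)}{|\La|}\;+\;\om\bigl(G(\bar X_\La,\bar Y_\La)\bigr).
\eeq

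The plan then is to pass to the $\La\nearrow\bbZ^d$ (Van Hove) limit term by term. The first two are standard: by definition of the mean entropy functional (Section~\ref{sec: states}), $|\La|^{-1}S(\om_\La)\to s(\om)$; and a routine rearrangement of $\sum_{A\subset\La}\om(\Phi_A)$ using translation invariance of $\om$, together with the finite-range assumption on $\Phi$ absorbing the boundary terms, gives $|\La|^{-1}\om(H_\La)\to\om(E_\Phi)$ with $E_\Phi$ as defined in \eqref{def: energy of interaction}. For the third term I would expand $G$ using a fixed symmetric quantization $\tilde G$ and apply Lemma~\ref{lem: concentration} to each monomial $\bar X_{\La,\al(1)}\cdots\bar X_{\La,\al(n)}$ separately, where each factor is $\bar X_\La$ or $\bar Y_\La$ according to $\al$. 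The lemma yields $[\om(X)]^k[\om(Y)]^l$ for the monomial with $k$ occurrences of $X$ and $l$ of $Y$; summing the finitely many monomials with the coefficients $\tilde G(\al)$ produces exactly $g(\om(X),\om(Y))$, as recorded in \eqref{eq: concentration general}.

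Combining the three limits gives
\beq
\liminf_{\La\nearrow\bbZ^d}\frac{1}{|\La|}\log Z^G_\La(\Phi)\;\geq\;s(\om)-\om(E_\Phi)+g(\om(X),\om(Y)),
\eeq
and taking the supremum over $\om\in\caS_{\mathrm{erg}}$ completes the proof. The only genuinely noncommutative step is the third convergence, where one might worry that ordering of the factors matters in finite volume; however, Lemma~\ref{lem: concentration} is insensitive to the order of the limits in \eqref{eq: limit k and l} and hence the ordering of factors is asymptotically irrelevant. This is the main (and really the only) obstacle, and it is handled precisely by the noncommutative law of large numbers already established for ergodic states. Note that no decoupling hypothesis on the Gibbs state of $H_\La$ is used here; only properties of the trial state $\om$ enter.
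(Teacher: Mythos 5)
Your proposal is correct and follows essentially the same route as the paper: apply the finite-volume Gibbs variational inequality with the restriction $\om_\La$ of an ergodic trial state, pass to the Van Hove limit term by term (mean entropy by definition, energy by translation invariance and finite range, and the $G$-term via the noncommutative law of large numbers \eqref{eq: concentration general}), then take the supremum over $\om\in\caS_{\mathrm{erg}}$. No further comment is needed.
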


\begin{proof}
Consider a  state $\om \in  \caS_{\mathrm{erg}}$.
We show that
\beq  \label{eq: proof lower bound one state}
\liminf_{\La \nearrow \bbZ^d} \frac{1}{\str \La \str}\log  Z^G_{\La}(\Phi) Ê\geq  g(\om(X),\om( Y ))  +s (\om) - \om(E_\Phi) 
\eeq
Consider, for each volume ${\La}$, the restriction $\om_{\La}:= \om\big\str_{\caB_{\La}}$.   By the  finite-volume variational principle 
\beq
 \frac{1}{\str {\La} \str}\log  Z^G_{{\La}}(\Phi) Ê\geq  \om_{\La} (G(\bar X_{\La},\bar Y_{\La} )  )  + \frac{1}{\str {\La} \str} S(\om_{\La}) -   \frac{1}{\str {\La} \str}  \om_{\La}(H_{\La})
\eeq
 The following convergence properties apply  (${\La} \nearrow \bbZ^d$ in the sense of Van Hove): 
\ben
\item{
\beq
\om_{\La} (G(\bar X_{\La},\bar Y_{\La} )  ) = \om(G(\bar X_{\La},\bar Y_{\La} )  )  \to g(\om(X),\om( Y )), \qquad {\La} \nearrow \bbZ^d   \label{eq: convergence of observable}
\eeq
  }
\item{
\beq
 \frac{1}{\str {\La} \str} S(\om_{\La})  \to s (\om), \qquad {\La} \nearrow \bbZ^d      \label{eq: convergence of entropy}
\eeq
  }
\item{
\beq
    \frac{1}{\str {\La} \str} \om(H_{\La})   \to  \om(E_{\Phi}), \qquad {\La} \nearrow \bbZ^d   \label{eq: convergence of energy}
\eeq
  }
\een
The relation \eqref{eq: convergence of energy} is obvious from the summability of $\Phi$, see Section \ref{sec: hamiltonian and observables}. The convergence  \eqref{eq: convergence of entropy} is the definition of the mean entropy $s$. Finally,  \eqref{eq: convergence of observable} follows from the ergodicity of  $\om$ as explained in Section \ref{sec: states}.

The relation \eqref{eq: proof lower bound one state} now follows immediately, since one can repeat the above construction for any ergodic state $\om$.
\end{proof}

\section{The upper bound} \label{sec: upper bound}
\subsection{Reduction to product states}

In this section, we outline how to approximate
\beq
\frac{1}{\str \La \str}\log  Z^G_{\La}(\Phi)
 \eeq
by a similar expression involving the partition function of a block-product state.
Fix a hypercube $V=[-L,L]^d$ and cover the lattice with its translates, as explained in Section \ref{sec: approx by ergodic}. From now on, $\La$ is chosen such that it is a 'multiple' of $V$, which is sufficient by the remark at the end of Section \ref{sec: results}.  Define the observables
\beq
H_{\La}^{V}\equiv  H_{\La}^{\Phi,V} , \qquad \bar X_{\La}^V, \qquad    \bar Y_{\La}^V
\eeq
by cutting all terms that connect any two translates of $V$, i.e.,
\beq
\bar X_{\La}^V :=    \frac{1}{\str \La \str} \sum_{\footnotesize{ \left.\begin{array}{c} j \in \La  \\  \exists i \in \bbZ^d/V:   \supp \tau_j A \subset V + i        \end{array}\right. }}    \tau_j X.
\eeq
and analogously for $H_{\La}^{V}$ and $\bar Y_{\La}^V$. One can say that these observables with superscript $V$ are 'one-block' observables with the blocks being translates of $V$.
One  easily derives that 
\beq
\norm  \bar X_{\La}^V- \bar X_{\La} \norm \leq      \norm X \norm   \str \supp X \str  \frac{ \str\partial V\str }{   \str V \str    }, \qquad   \norm H_{\La}^{V}-H_{\La} \norm \leq     r(\Phi)    \str \La \str \frac{ \str\partial V\str }{   \str V \str    }
\eeq
with the number $r (\Phi )$ as defined in Section \ref{sec: hamiltonian and observables}.

Using the log-trace inequality, we bound
\beq \label{eq: difference to product}
\frac{1}{\str \La \str} \log  \Tr_\La
  \left( e^{-H_\La+ |\La|\,G(\bar X_\La,\bar Y_\La )} \right)-  \frac{1}{\str \La \str} \log \Tr_\La \left( e^{-H_\La^V  + \str \La \str G(\bar X_{\La}^V, \bar Y_{\La}^V)}    \right)
\eeq
as follows
\baq
 \textrm{ \eqref{eq: difference to product}  } &\leq &   \frac{1}{\str \La \str} \norm H_\La- H_\La^V \norm  +    \norm G(\bar X_\La,\bar Y_\La )-    G(\bar X_{\La}^V, \bar Y_{\La}^V) \norm  \nonumber  \\
&\leq &     \big(  r (\Phi )+   C_g  (\norm X \norm   \str \supp X \str+ \norm Y \norm   \str \supp Y \str) \big)  \frac{ \str\partial V\str }{   \str V \str    }    \nonumber    \eaq
where  $C_g$  is constant depending on the function $G$.
The second term of \eqref{eq: difference to product} is clearly the pressure of a product state with mean field interaction. We will find an upper bound for this pressure by slightly extending the treatment of Petz et al.\ in \cite{petzraggioverbeure}. We prove an  'extended PRV' -lemma, Lemma \ref{lem: PRV} in the next section.
\subsection{ The extended Petz-Raggio-Verbeure upper bound}

In this section, we outline the bound from above on the quantity
\beq
\frac{1}{\str \La \str} \log \Tr_\La \left( e^{-H_\La^V  + \str \La \str G(\bar X_{\La}^V, \bar Y_{\La}^V)}    \right)
\eeq
that appeared in \eqref{eq: difference to product}.

To do this, let us make the setting slightly more abstract.
Consider again the  lattice $\bbZ^d$ with the one-site Hilbert space $\caG$, which should be thought of as
\beq
\caG:= \otimes_{V} \caH
\eeq
In words, the  sites  of the new lattice are actually blocks of the old lattice. In particular, the trace $\Tr_{\La}$ has a slightly different meaning as before since the one-site Hilbert space has changed, i.e.\ the trace is on the Hilbert space $\otimes_{\La}\caG$ instead of $\otimes_{\La}\caH$.
Let $D, A,B$ be  one-site observable on the new lattice, i.e.\ $D, A,B$ are Hermitian operators on  $\caG$.
 The
extended PRV (Petz-Raggio-Verbeure)  states that
\begin{lemma}[Extended PRV]\label{lem: PRV}
Let all symbols have the same meaning as in Sections \ref{sec: hamiltonian and observables}-\ref{sec: noncommutative functions}-\ref{sec: states}, except that the one-site Hilbert space is changed from $\caH$ to $\caG$. Then
\beq \label{eq: extended prv}
\limsup_{\La \nearrow \bbZ^d}  \frac{1}{\str \La \str} \Tr_\La \left( e^{-D_\La  + \str \La \str G(\bar A_{\La}, \bar B_{\La}) }    \right)   \leq \mathop{\sup}\limits_{\om \in \caS_{\textrm{sym}} } \left(  \om(G(\bar A,\bar B)) +s(\om) -\om(D) \right)
\eeq
In particular $\om(G(\bar A, \bar B))$ defined as \eqref{eq: limit k and l} exists.
\end{lemma}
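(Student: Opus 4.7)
The plan is to combine the finite-volume Gibbs variational principle with a symmetrization step and a weak-$*$ compactness argument, and to evaluate the limiting mean-field term through St{\o}rmer's decomposition. First I invoke the finite-volume Gibbs variational principle to write
\[
\frac{1}{\str\La\str}\log \Tr_\La\bigl( e^{-D_\La + \str\La\str G(\bar A_\La, \bar B_\La)}\bigr) = \sup_{\rho_\La} F_\La(\rho_\La),
\]
with $F_\La(\rho_\La) := \Tr\bigl(\rho_\La G(\bar A_\La, \bar B_\La)\bigr) + \tfrac{1}{\str\La\str} S(\rho_\La) - \tfrac{1}{\str\La\str}\Tr(\rho_\La D_\La)$. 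Since $D_\La$ and $G(\bar A_\La, \bar B_\La)$ are invariant under every permutation of the sites of $\La$ and the von Neumann entropy is concave, replacing $\rho_\La$ by its symmetrization $\tfrac{1}{\str\La\str!}\sum_\pi U_\pi \rho_\La U_\pi^*$ does not decrease $F_\La$, so the supremum is attained on permutation-invariant density matrices; I fix a near-maximizing sequence $\rho_\La^*$ of such states. Viewing the $\rho_\La^*$ as a consistent family of states on fixed finite sub-boxes and invoking weak-$*$ compactness together with a diagonal extraction, I pass to a subsequence whose limit is a symmetric state $\om \in \caS_{\mathrm{sym}}$ on $\frU$.

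It then remains to pass to the limit in each of the three terms of $F_\La(\rho_\La^*)$. For the energy, permutation invariance gives $\tfrac{1}{\str\La\str}\Tr(\rho_\La^* D_\La) = \Tr\bigl(\rho_\La^*\big|_{\{0\}} D\bigr)$, which converges to $\om(D)$ by weak-$*$ convergence of the one-site marginal on the finite-dimensional algebra $\caB(\caG)$; for the entropy, $\limsup_\La \tfrac{1}{\str\La\str} S(\rho_\La^*) \leq s(\om)$ is the standard upper semi-continuity of the mean entropy.

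The hard step, and the one where the extension to two noncommuting observables really enters, is the mean-field term. Expanding $G$ into noncommutative monomials via \eqref{cond: g} with $X_1 := A$ and $X_2 := B$, I write
\[
\bigl(\bar X_{\al(1)}\bigr)_\La \cdots \bigl(\bar X_{\al(n)}\bigr)_\La = \tfrac{1}{\str\La\str^n}\sum_{(i_1,\ldots,i_n)\in\La^n}\tau_{i_1}X_{\al(1)}\cdots\tau_{i_n}X_{\al(n)},
\]
and split the tuples into those with at least one coincidence (an $O(\str\La\str^{-1})$ fraction whose contribution is crushed in operator norm) and the pairwise distinct ones. On the latter, permutation invariance of $\rho_\La^*$ makes the trace depend only on the sequence $\al$, and its limit equals $\om(\tau_{j_1} X_{\al(1)} \cdots \tau_{j_n} X_{\al(n)})$ for any fixed pairwise distinct reference tuple $(j_1,\ldots,j_n)$. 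St{\o}rmer's theorem \eqref{eq: stormer} then evaluates this as $\int \d\nu_\om(\phi)\, \phi(A)^{k}\phi(B)^{\ell}$, with $k,\ell$ the multiplicities of $1$ and $2$ in $\al$; crucially, the ordering built into the quantization has dropped out in the limit, because on a product state the operators at distinct sites factor and commute. Summing over $\al$ with weights $\tilde G(\al)$ and invoking \eqref{cond: g} pointwise in $\phi$ collapses the total to $\int \d\nu_\om(\phi)\, g(\phi(A), \phi(B))$, which I take as the definition of $\om(G(\bar A, \bar B))$, thereby also establishing the existence claim at the end of the lemma. Combining the three limits yields $\limsup_\La F_\La(\rho_\La^*) \leq \om(G(\bar A,\bar B)) + s(\om) - \om(D)$, which is majorized by the supremum on the right-hand side of \eqref{eq: extended prv}.
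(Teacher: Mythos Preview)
Your argument follows essentially the same route as the paper's: start from the finite-volume Gibbs variational principle, use the permutation symmetry of the maximizer, extract a weak limit $\om\in\caS_{\mathrm{sym}}$, and control energy, entropy, and the mean-field term separately. Your treatment of the $G$-term (splitting off coincidences and using symmetry to reduce to a fixed tensor product $\tau_{j_1}X_{\al(1)}\cdots\tau_{j_n}X_{\al(n)}$) is exactly what the paper does in its ``G-estimates'' paragraph; your additional appeal to St{\o}rmer there is not needed for the lemma as stated (the paper simply \emph{defines} $\om(G(\bar A,\bar B))=\om(\otimes^kA\otimes^lB)$ for symmetric $\om$ and checks this matches \eqref{eq: limit k and l}), but it does no harm.

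The one place where your write-up is not yet rigorous is the entropy step. The inequality $\limsup_\La \tfrac{1}{|\La|}S(\rho_\La^*)\le s(\om)$ is correct, but it is \emph{not} an instance of the ``standard upper semicontinuity of the mean entropy'': that statement compares $s(\om_n)$ with $s(\om)$ for a weakly convergent sequence of \emph{infinite-volume} translation-invariant states, whereas your $\rho_\La^*$ are finite-volume objects with no a priori relation to any mean entropy. You can close the gap directly via subadditivity (for any fixed $m$, partition $\La$ into blocks of size $m$, use $S(\rho_\La^*)\le \lfloor|\La|/m\rfloor\,S(\rho_\La^*|_{V_m})+O(m)$, then continuity of the finite-dimensional entropy on $\caB_{V_m}$ and finally $m\to\infty$). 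The paper's proof handles this point differently and more structurally: it first passes from $\mu_\La$ to the block-product-and-average state $\bar\mu^{\{\La\}}$ of Section~\ref{sec: approx by ergodic}, for which one has the \emph{exact} identity $s(\bar\mu^{\{\La\}})=\tfrac{1}{|\La|}S(\mu_\La)$, and only then invokes upper semicontinuity of $s$ along the weakly convergent sequence $\bar\mu^{\{\La_n\}}\to\mu$. Either route works; just do not leave the step as a bare citation of upper semicontinuity.
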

To appreciate the similarity between  \eqref{eq: extended prv} and \eqref{eq: prop mean field variational principle}, one should realize that $D$ is a local energy operator, as $E_{\Psi}$ in \eqref{eq: prop mean field variational principle}.
The proof of this lemma in the case that $A=B$ is in the original paper \cite{petzraggioverbeure}. The proof for the more general case is presented in Section \ref{sec: proof of PRV}. 
Of course, one can prove that the  RHS of \eqref{eq: extended prv} is also a lower bound: it suffices to copy Section \ref{sec: lower bound}.

By the St{\o}rmer theorem, see \eqref{eq: stormer}, each symmetric state $\om$ on $\frU$ is a convex combination of product states, and since all terms on the RHS of \eqref{eq: extended prv} are affine functions of $\om$, it follows that the $\sup$ can be restricted to  product states. Since, moreover, all product states are ergodic, we can replace $\om(G(\bar A,\bar B))$ by $g(\om(A), \om(A))$. Hence, Lemma \ref{lem: PRV} implies that 
\beq \label{eq: extended prv cosequence}
\limsup_{\La \nearrow \bbZ^d}  \frac{1}{\str \La \str} \Tr_\La \left( e^{-D_\La  + \str \La \str G(\bar A_{\La}, \bar B_{\La}) }    \right)   \leq \mathop{\sup}\limits_{\om \, \textrm{prod.}}  \left( g(\om(A), \om(A)) + s(\om) -\om(D) \right)
\eeq

\subsubsection{From the extended PRV to the upper bound}
Next, we use \eqref{eq: extended prv cosequence} to formulate an upper bound on the quantity
\beq \label{eq: to be upper bounded}
\frac{1}{\str \La \str} \Tr_\La \left( e^{-H_\La^V  + \str \La \str G(\bar X_{\La}^V, \bar Y_{\La}^V)}    \right)
\eeq
for $\La$ a multiple of $V$. This means that we have to recall that the lattice sites in \eqref{eq: extended prv cosequence} are in fact blocks.
 We write $\La^{*}:= \La /V$
and choose
\baq
D &:=&   H_V   \nonumber \\
A &:=&   \bar X_V   \nonumber  \\
B  &:=&    \bar   Y_V.    \nonumber
\eaq
Then, by the extended PRV,
\baq
\textrm{\eqref{eq: to be upper bounded}} &\leq&  \sup_{\om \, \textrm{prod. on} \,  \caB(\La^{*}) }       \left(  g(\om(\bar A),\om(\bar  B)) +   \frac{1 }{ \str V \str}    s^*(\om) -  \frac{1 }{ \str V \str } \om(D)   \right)   \nonumber \\[2mm]
  &=&  \sup_{\om_V\,  \textrm{on}\,  \caB_V}       \left(   g (  \om_V(\bar X_V) ,   \om_V(\bar   Y_V ))  +    \frac{1 }{ \str V \str }    S(\om_V) -  \frac{1 }{ \str V \str } \om_V( H_V)   \right)  \nonumber
\eaq
where $s^*$ indicates that this is the entropy density on the block lattice $\La^*$, hence it should be divided by $\str V\str$ to obtain the density on $\La$.  Now, let $\tilde\om$ be the infinite-volume state obtained by taking a block-product over states $\om_V$ and let $\bar{\om}$ be its 'translation-average', as in Section \ref{sec: approx by ergodic}. By the conclusions of Section \ref{sec: approx by ergodic}, it follows that $\bar{\om}$ is ergodic and $s(\bar{\om})= S(\om_V)  $.
 Also, we see that
\baq
\str \om_V(\bar X_V)  - \bar{\om} (X) \str   & \leq & \norm X \norm  \str \supp X \str  \frac{\str \partial V \str}{\str V \str} \nonumber \\
 \str \om_V(H_V)  -  \bar{\om}(E_{\Phi})   \str   & \leq & r( \Phi)   \frac{\str \partial V \str}{\str V \str}  \nonumber
\eaq
and analogously for  $ \bar Y_V$.
Consequently, we obtain
\beq
\textrm{\eqref{eq: to be upper bounded}}  \leq\mathop{\sup}_{\om \in  \caS_{\mathrm{erg}} } \left(   g (  \om(\bar X) ,   \om(\bar   Y )) +   s(\om) - \om(E_{ \Phi})   \right)  + O(  \frac{\str \partial V \str}{\str V \str}), \qquad V \nearrow \bbZ^d   \nonumber
\eeq
which proves the upper bound  for Theorem \ref{prop: mean field var principle}, since the $O(  \frac{\str \partial V \str}{\str V \str})$-term can be made arbitrarily small by increasing $V$. 

\section{Proof of Lemma \ref{lem: PRV}} \label{sec: proof of PRV}

Let the state $\mu_{\La}$ on $\caB_{\La}$ be given by
\beq
\mu_{\La}(\cdot) = \frac{1}{Z^{G}_{\La}(D)} \Tr_{\La} \left( e^{-D_{\La}  + \str {\La} \str G(\bar A_{{\La}}, \bar B_{{\La}}) }   \cdot \right) \nonumber
\eeq
with
\beq
Z^{G}_{\La}(D) :=  \Tr_{\La} \left( e^{-D_{\La}  + \str {\La} \str G(\bar A_{{\La}}, \bar B_{{\La}}) }   \right) \nonumber.
\eeq
Naturally, $\mu_{\La}$ is the finite-volume Gibbs state that saturates the variational principle, i.e.\
\baq \label{eq: prv finite volume gibbs}
\frac{1}{\str {\La} \str}  \log Z^{G}_{\La}(D) &  =&  \sup_{ \om_{\La}\, \textrm{on}\, \caB_{\La} } \left(\om_{\La} ( G(\bar A_{\La},\bar B_{\La}) ) + \frac{1}{\str {\La} \str} S(\om_{\La}) -\om_{\La}(D) \right)  \nonumber \\
 &=&  \mu_{\La} ( G(\bar A_{\La},\bar B_{\La}) ) +\frac{1}{\str {\La} \str} S(\mu_{\La}) - \mu_{\La}(D)
\eaq

Our strategy is  to attain the 'entropy' and 'energy' of the state $\mu_{\La}$ via ergodic states. For definiteness, we assume that $G$ is of the form
\beq
G(\bar A_{{\La}}, \bar B_{{\La}}):=[\bar A_{{\La}}]^{k} [\bar B_{{\La}}]^l \qquad \textrm{ for some integers} \, \, k,l, \nonumber
\eeq
(which, strictly speaking, is not allowed since $G(\bar A_{{\La}}, \bar B_{{\La}})$ has to be a self-adjoint operator, but this does not matter for the argument in this section).
The general case  follows by the same argument.

Since $\mu_{\La}$ is a symmetric state on $\caB_{\La}$ (hence translation invariant in the restricted sense of Section \ref{sec: approx by ergodic}), we can apply the construction in Section \ref{sec: approx by ergodic}, which yields us infinite-volume states
$\tilde \mu$ and $\bar{\mu}$.  Since we will repeat the construction for different ${\La}$, we indicate the ${\La}$-dependence in  $\tilde \mu^{\{{\La}\}}$ and $\bar{\mu}^{\{{\La}\}}$, but remembering that these are states on the infinite lattice.
They satisfy
\beq
s(\bar{\mu}^{\{{\La}\}})  =  \frac{1}{\str {\La} \str}S(\mu_{\La})
\eeq
We have also established in Section \ref{sec: approx by ergodic} that $\bar{\mu}^{\{{\La}\}}$ is ergodic and that the states $\bar{\mu}^{\{{\La}\}}$ and $\tilde{\mu}^{\{{\La}\}}$ approximate $\mu_{\La}$ for observables which are empirical averages.  However, we cannot conclude yet that they have comparable values for $G(\bar A, \bar B)$, except in the case where $G$ is linear. Essentially, such a comparison is achieved next by using the fact that $\mu_{\La}$ is symmetric.

Choose a sequence of volumes ${\La}_n$ such that along that sequence the RHS of \eqref{eq: prv finite volume gibbs} converges. Be assume that $\bar\mu^{{\La}_n}$  has a weak-limit, as $n \nearrow \infty$, which can always be achieved (by the weak compactness) by restricting to  a subsequence. of ${\La}_n$.  We call this limit $\mu$. By construction, it is a symmetric state. \\

\noindent \textbf{Energy estimate:} 
Since $\bar\mu^{{\La}_n} \rightarrow \mu$, weakly, and  $\bar\mu^{{\La}_n}(D)=\mu_{{\La}_n}(D)$, we have 
\beq  \label{eq: convergence of energies}
\mu_{{\La}_n} (D) \rightarrow  \mu (D)
\eeq \\

\noindent \textbf{G-estimates:}
Using the symmetry of the state $\mu_{\La}$, we estimate
\beq  \label{eq: G under finite symmetric}
\mu_{\La} (G(\bar A_{\La},\bar B_{\La})) =  \mu_{\La}(  \otimes^k A \otimes^l B   ) +O(\frac{\str k+l\str}{\str {\La} \str}) \max{(\norm A \norm, \norm B \norm)}^{k+l}
\eeq
where the tensor products
\beq
\otimes^k A \otimes^l B  :=  \underbrace{A \otimes  \ldots \otimes A}\limits_{k \, \textrm{copies}} \otimes   \underbrace{B \otimes \ldots \otimes  B}\limits_{l \, \textrm{copies}}
 \eeq
 denote that all one-site operators are placed on different sites. Since $\mu_{\La}$ is symmetric, we need not specify on \emph{which} sites.  The error term comes from those terms in the expansion of the monomial where two one-site operators hit the same site.
 Since $\mu$ is symmetric, we obtain analogously that
\beq   \label{eq: G under infinite symmetric}
\mu (G(\bar A,\bar B)) =    \mu( \otimes^k A \otimes^l B  )
\eeq
In particular, the LHS is well-defined.
Hence,  by combining \eqref{eq: G under finite symmetric} and \eqref{eq: G under infinite symmetric}, we obtain
\beq     \label{eq: convergence of G}
\mu_{{\La}_{n} } (G(\bar A_{{\La}_{n} },\bar B_{{\La}_{n} }))  \rightarrow   \mu (G(\bar A,\bar B)).
\eeq
For a more general noncommutative polynomial $G$ as defined in Section \ref{sec: noncommutative functions}  (not necessarily a monomial),  the convergence \eqref{eq: convergence of G} follows easily since $G(\bar A_{{\La}_{n} },\bar B_{{\La}_{n} })$ can be approximated in operator norm by polynomials.  \\

\noindent \textbf{Entropy estimates:}
As established in Section \ref{sec: approx by ergodic}, we have
\beq
 \frac{1}{\str {\La} \str} S(\mu_{{\La}}) =    s(\bar{\mu}^{\{{\La}\}}),\qquad \textrm{for all} \, \, {\La}
\eeq
By the lower semi-continuity of the infinite-volume entropy and the convergence $\bar{\mu}^{{\La}_n} \rightarrow \mu$, we get that
\beq
\mathop{\limsup}\limits_{n\nearrow \infty}   s(\bar{\mu}^{\{{\La}_n\}})   \leq  s(\mu )
\eeq
Hence
\beq \label{eq: convergence of entropies}
\mathop{\lim}\limits_{n\nearrow \infty}   \frac{1}{\str {\La}_n \str}  S(\mu_{{\La}_{n}})   \leq  s(\mu)
\eeq

By combining the convergence results  (\ref{eq: convergence of energies}, \ref{eq: convergence of G}, \ref{eq: convergence of entropies}), we have proven that there is a symmetric state $\mu$ such that the RHS of \eqref{eq: extended prv} with $\om \equiv \mu$ is larger than a given limit point of the RHS of \eqref{eq: prv finite volume gibbs}. Since the construction can be repeated for any limit point, this concludes the proof of Lemma \ref{lem: PRV}.

\vspace{1cm}
\subsection*{Acknowledgment} The authors thank M.~Fannes, M.~Mosonyi,  Y.~Ogata, D.~Petz and A.~Verbeure for fruitful discussions.  K.~N.~ is also grateful to the Instituut voor Theoretische Fysica, K.~U.~Leuven, and to Budapest University of Technology and Economics for kind hospitality, and acknowledges the support from the Grant Agency of the Czech Republic (Grant no.~202/07/J051).
W.~D.~R.~ acknowledges the support of the FWO-Flanders. L.R.-B. acknowledges the support of the NSF  
(DMS-0605058)
\vspace{1cm}

\appendix

\section{Proof of Lemma \ref{lem: concentration}} \label{app: concentration}

To prove Lemma \ref{lem: concentration}, it is convenient to introduce an extended framework:  Let  $\pi_\om$ be the cyclic GNS-representation associated to the state $\om$,  $\frH_{\om}$ the associated Hilbert space and $\psi \in \frH_\om$  the representant of the state $\om$, i.e.\
\beq
\om (A) = \langle \psi, \pi_{\om}(A) \psi \rangle_{\frH_{\om}}, \qquad  A \in \frU
\eeq
The set $\pi_{\om}(\frU)$ is a subalgebra of $\caB(\frH_{\om})$. 
Let $U_{j}, \in \bbZ^d$ be the unitary  representation of the translation group induced on $\pi_{\om}(\frU)$, i.e.\ 
\beq
U_j \pi_{\om}(A) U_{j}^* =    \pi_{\om}( \tau_j A).
\eeq
Ergodicity of $\om$  implies (see e.g.\ the proof of  Theorem III.1.8  in  \cite{simon})   that
\beq \label{eq: von neumann ergodic}
  \frac{1}{\str\La\str}\sum_{j \in \La}  U_j        \quad \mathop{\longrightarrow}\limits_{\La \nearrow \bbZ^d}^{\mathrm{strongly}}  \qquad  P_{\psi}
\eeq
where $P_\psi$ is the one-dimensional orthogonal projector associated to the vector $\psi$, and $\La \nearrow \bbZ^d$ in the sense of Van Hove.
Using \eqref{eq: von neumann ergodic} and the translation-invariance $U_j \psi=\psi$, one calculates
\baq
  \pi\left( \bar X_{\La} \right)   \pi\left( \bar Y_{\La} \right) \psi    &  = &  \frac{1}{\str\La\str^2}      \sum_{j, j' \in \La}  U_j   \pi(X) U_{j'-j}  \pi(Y)  U_{-j'} \psi   \nonumber  \\  
   &   \quad \mathop{\longrightarrow}\limits_{\La \nearrow \bbZ^d}    \quad &  P_{\psi} \pi(X)  P_{\psi}  \pi(Y)  \psi  =  \om(X) \om(Y)  \psi        \nonumber
\eaq
for local observables $X, Y \in \frU$. Taking the scalar product with $\psi$, we conclude that $\om (\bar X_{\La}  \bar Y_{\La}  )  \rightarrow \om (X) \om(Y ) $. The same argument works for all  polynomials in $\bar X_{\La},  \bar Y_{\La} $, thus proving Lemma \ref{lem: concentration}.
Finally, we remark that one can also construct the operators $\bar X, \bar Y$ as weak limits of $\bar X_{\La},  \bar Y_{\La} $, as $\La \nearrow \bbZ^d$ (these weak limits are simply multiples of identity:  $ \om(X) 1, \om(Y) 1$). This is however not necessary for our results.

\bibliographystyle{plain}
\bibliography{mylibrary08}

\end{document}